\mathchardef\ordinarycolon\mathcode`\:
\newcommand{\private}{\mathrm{private}}
\newcommand{\public}{\mathrm{public}}
\newcommand{\ego}{\mathrm{ego}}
\newcommand{\node}{\mathrm{node}}
\newcommand{\nodes}{\mathrm{nodes}}
\newcommand{\merger}{\mathrm{merger}}
\begin{document}

\title{Dominating Sets and Ego--Centered Decompositions in Social Networks} 

\author{Moses A. Boudourides \and Sergios T. Lenis}

\institute{University of Patras, Greece}

\date{Revised version: date}

\abstract{
Our aim here is to address the problem of decomposing a whole network into a minimal number of ego--centered subnetworks. For this purpose, the network egos are picked out as the members of a minimum dominating set of the network. However, to find such an efficient dominating ego--centered construction, we need to be able to detect all the minimum dominating sets  and to compare all the corresponding dominating ego--centered decompositions of the network. To find all the minimum dominating sets of the network, we are developing a computational heuristic, which is based on the partition of the set of nodes of a graph into three subsets, the always dominant vertices, the possible dominant vertices and the never dominant vertices, when the domination number of the network is known. To compare the ensuing dominating ego--centered decompositions of the network, we are introducing a number of structural measures that count the number of nodes and links inside and across the ego--centered subnetworks. Furthermore, we are applying the techniques of graph domination and ego--centered decomposition for six empirical social networks.
}

\maketitle

\section{Introduction}
\label{intro}

In principle, there are two kinds of data collection designs for social network studies: either ``whole--network'' or ``egocentric'' designs \cite{RefMars}. Whole--network studies examine sets of interrelated actors linked by sets of relationships, the way they are extracted from certain relevant structural data sets \cite{RefWF}. The  whole--network data collection designs gather, without exception, all available structural information about the concerned sets of actors and their relationships, as far as the examined social network object may be regarded as already constituting a distinctive and bounded social collectivity (with the important caveat, as Peter Marsden has apprised, that ``network boundaries are often permeable and/or ambiguous'' \cite{RefWF}).

On the other side, the egocentric data collection designs prioritize only  particular focal actors and a limited range of their interactions by sampling and filtering out  the pertinent information from a larger population of emprical structural data. Thus, an {\bf{ego--centered network}} (or {\bf{ego--net}} or {\bf{personal network}}) \cite{RefCBe} hinges on some actor in the network, called the {\bf{ego}}, and the set of actors surrounding the ego, called {\bf{alters}}. Usually, the network data sets for ego--centered networks originate from General Social Surveys \cite{RefSMH} or other measurements conducted in the context of social science surveys or/and questionnaires \cite{Marsden}.

Although whole--network and egocentric designs are conceptually and operationally distinct, they can be considered to be interrelated from a methodological point of view. At one end, given an extensive whole network, there is a trivial (but quite redundant) assemblage of a large number of ego--centered networks spanning the whole network, when each actor in the latter would have been treated as a separate ego  \cite{RefMarsE}. On the other end, an egocentric design, in which egos are sampled as much densely as possible, may be reassembled to give an emerging whole--network construction \cite{RefKirke} (although as the outcome of a rather supernumerary data collection design).

Here, we are concentrating on the methodology of detecting an ego--centric decomposition of a whole network by following a rather parsimonious formal procedure. We are asking the following question: Given a whole network, how can one select a possibly minimal number of actors in such a way that, when these actors are considered as distinct egos, the total collection of the ego--centered networks, which are formed in this way, might reach all other actors and span the entire whole network? Of course, the answer to this question depends on what exactly one means by ``reaching''--``spanning'' and it is our intention to formulate this question in formal graph--theoretic terms by conceptualizing on the all--embracing structural patterns of adjacency, in which actors are embroiled in the network. In this way, a possible answer might be given by resorting to the formal graph--theoretic concept of the ``minimum dominating set,'' which is defined as the minimal set of actors whose members are adjacent to all other actors in the network. In other words, we propose a structurally conclusive solution to the problem of the efficient construction of a set of ego--centered networks, into which a given whole network might reduce, by introducing a decomposition of the whole network into smaller ego--centered subnetworks, which are built around the actors of a minimum dominating set of the whole network.\footnote{Of course, in such a decomposition, a minimum dominating set literally ``spans'' the vertex set. However, what is equally important to know is how the graph edge set is distributed  among the constituent ego--centered subnetworks, a question that we intend to tackle at the end of section 3 (in Corollary 2).} Apparently, the problem at hand is an optimization problem in the sense that it focuses on finding the minimum dominating set of actors that achieves a fixed goal, i.e., in our case, the goal of reaching (``being adjacent'') to all other actors in the network. Note that, in the literature of social network analysis, one often encounters the reverse problem as well: to find a set of actors of fixed size that achieves a certain structural goal (as, for instance, the goal of optimally diffusing something through the network \cite{RefBor}). 

Of course, it is well known that computationally the dominating set problem is $\mathcal{NP}$--complete \cite{RefJ}, although there are certain efficient algorithms (typically based on integer linear programming) for finding approximate solutions. Here, we are using the algorithms implemented in the Python--based SageMath software in order to find just the domination number of a given graph (network) and, subsequently, we are following our own methodology in order to detect all possible minimum dominating sets in the graph (network). The key point of our approach is based on a computational heuristics in partitioning the set of all actors (vertices) of a network (graph) into three sets, the always dominant, the possible dominant and the never dominant actors (vertices), when the domination number of the network is known. It turns out that knowing this partition simplifies the efficient computation of all minimum dominating sets (of course, always under the limitations imposed by the complexity of the problem on the size of the network). After defining the dominating ego--centered decomposition of a network in this way, we proceed in comparing the ensuing dominating ego--centered subnetworks through a number of structural measures that count the number of vertices and edges (relationships) inside and across the applied  dominating decomposition.

So, in the second section of our  investigation, we are discussing the fundamentals of the graph domination theory for social networks and sketch out our methodology for detecting all the minimum dominating sets of a given network (graph). In the third section, we are outlining the formal construction of the dominating ego--centered decomposition of a social network. Moreover, we are introducing the relevant structural measures and indices that arise in the context of this decomposition. Finally, in the last section, we are applying our techniques for the computation of all minimum dominating sets and the construction of the corresponding dominating ego--centered network decompositions to six examples of well known empirical social networks.

\section{Dominating Sets in Graphs}
\label{doms}

Let $G = (V, E)$ be a simple undirected graph with set of vertices $V$, where $|V| = n$,\footnote{For any set $X$, we denote by $|X|$ the {\bf{cardinality}} of $X$, i.e., the number of elements of $X$.} and set of edges $E = \{(u, v)\!: u \sim v, {\mbox{ for some }} u, v \in V, u \neq v\}$, where $u \sim v$ denotes that vertices $u$ and $v$ are adjacent (with the understanding that adjacency is a symmetric relationship, i.e., that edges $(u, v)$ and $(v, u)$ are identical). The open neighborhood of vertex $u$ is denoted as $N(u) = \{v \in V\!: (u, v) \in E\} = \{v \in V\!: v \sim u\}$, the closed neighborhood of $u$ is denoted as $N[u] = N(u) \cup \{u\}$, the degree of $u$ as $\deg(u) = |N(u)|$ and the (geodesic) distance between $u$ and $v$ as $d(u, v)$. For any $A \subset V$, we denote $N(A) = \bigcup_{v \in A} N(v), N[A] = N(A) \cup A$ and $d(u, A) = \min \{d(u, v)\!: v \in A\}$. Moreover, we assume that $G$ is the underlying graph of a social network and that the vertices of $G$ represent the actors of the social network, while the edges of $G$ represent the relationships/ties among actors in the social network. Thus, from now on, we are going to use interchangeably the terms network--graph, vertices--actors and edges--relationships/ties. 

\begin{definition}
\cite{RefHHS} Let $D$ be a set of vertices of graph $G$ ($D \subset V$).
\begin{itemize}
\item[\small{$\bullet$}] $D$ is called {\bf{dominating set}} (or {\it{externally stable}} according to Claude Berge \cite{RefB}) if every vertex $v \in V$ is either an element of $D$ or is adjacent to an element of $D$. Moreover, $D$ is called {\it{minimal dominating set}} if no proper subset $D' \subset D$ is a dominating set.
\item[\small{$\bullet$}] $D$ is called a {\bf{minimum dominating set}} if the cardinality of $D$ is minimum among the cardinalities of any other dominating set.
\item[\small{$\bullet$}] The cardinality of a minimum dominating set $D$ is called the {\bf{domination number}} of graph $G$ and is denoted by $\gamma = \gamma(G)$.
\item[\small{$\bullet$}] Furthermore, $D$ is called {\bf{independent}} set (or {\it{internally stable}} set in the terminology of \cite{RefB}) whenever no two vertices of $D$ are adjacent. Note that an independent set $D$ is also a dominating set if and only if $D$ is a maximal independent set, in which case $D$ is called {\bf{independent dominating set}}.\footnote{The minimum cardinality of an independent dominating set of $G$ is called the {\bf{independent domination number}} of graph $G$ and is denoted by $i = i(G)$. Note that the minimum dominating set of $G$ is not necessarily independent in $G$ and, in general, $\gamma(G) \leq i(G)$.}
\end{itemize}
\end{definition}

The standard reference to graph domination theory is the book of T.W. Haynes, S.T. Hedetniemi and P.J. Slater, \textit{Fundamentals of Domination in Graphs} \cite{RefHHS}, where the proofs of the following basic results can be found.\\

A vertex $v \in D$ is called an {\it{enclave}} of $D$ if $N[v] \subseteq D$ and $v$ is called an {\it{isolate}} of $D$ if  $N(v) \subseteq V \smallsetminus D$. A set is called {\it{enclaveless}} if it does not contain any enclaves. 
Apparently, $D$ is is a dominating set if and only if one of the following holds (cf. \cite{RefHHS}):
\begin{itemize}
\item[\small{$\bullet$}] $N[D] = V$,
\item[\small{$\bullet$}] for every $v \in V \smallsetminus D$, $d(v, D) \leq 1$,
\item[\small{$\bullet$}] $V \smallsetminus D$ is enclaveless.
\end{itemize}

The first existence theorems about dominating sets in graphs were given by O. Ore in his 1962 book, \textit{Theory of Graphs} \cite{RefO}, and they are included in the next theorem and the corollary that follows:

\begin{theorem}
\label{Ore1}
\cite{RefO} A dominating set $D$ is a minimal dominating set if and only if, for each vertex $u \in D$, one of the following two conditions holds:
\begin{itemize}
\item[\small{$\bullet$}] either $u$ is an isolate of $D$, 
\item[\small{$\bullet$}] or there exists a vertex $v \in V \smallsetminus D$ such that $N[v] \cap D = \{u\}$, in which case $v$ is called a {\bf{private neighbor}} of $u$.
\end{itemize}
\end{theorem}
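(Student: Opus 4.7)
The plan is to prove both directions of the biconditional by analyzing what it means for a single vertex to be removable from $D$ while preserving domination.

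For the forward direction, I would suppose $D$ is a minimal dominating set and fix an arbitrary $u \in D$. Minimality says that $D' := D \smallsetminus \{u\}$ fails to dominate $V$, so there is some vertex $w \in V$ not dominated by $D'$, meaning $w \notin D'$ and $N(w) \cap D' = \emptyset$. Now I would split on whether $w = u$ or $w \neq u$. If $w = u$, then $N(u) \cap (D \smallsetminus \{u\}) = \emptyset$, which (since $u \notin N(u)$) gives $N(u) \subseteq V \smallsetminus D$, i.e., $u$ is an isolate of $D$. If $w \neq u$, then $w \in V \smallsetminus D$; since $D$ itself dominates $w$, we have $N(w) \cap D \neq \emptyset$, but $N(w) \cap (D \smallsetminus \{u\}) = \emptyset$ forces $N(w) \cap D = \{u\}$, and adjoining $w \notin D$ yields $N[w] \cap D = \{u\}$, so $w$ is a private neighbor of $u$.

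For the reverse direction, I would assume the dichotomy holds for every $u \in D$ and argue by contradiction that $D$ is minimal. Suppose some proper subset $D \smallsetminus \{u\}$ is still dominating. If $u$ is an isolate of $D$, then $N(u) \cap D = \emptyset$, so $D \smallsetminus \{u\}$ cannot reach $u$ (neither through membership nor adjacency), a contradiction. If instead $u$ has a private neighbor $v \in V \smallsetminus D$ with $N[v] \cap D = \{u\}$, then removing $u$ leaves $N(v) \cap (D \smallsetminus \{u\}) = \emptyset$ with $v \notin D \smallsetminus \{u\}$, so $v$ is no longer dominated, again a contradiction.

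I expect the proof to be mechanical rather than deep; the main care is bookkeeping around the distinction between the open and closed neighborhoods $N(v)$ and $N[v]$, and making sure the case $w = u$ in the forward direction is handled separately from $w \neq u$ (since only in the latter case does $w$ lie outside $D$ and qualify as a candidate private neighbor). Once this case split is isolated cleanly, both implications reduce to one-line verifications, so the statement follows directly from the definitions of dominating set, minimality, isolate, and private neighbor without invoking any auxiliary machinery.
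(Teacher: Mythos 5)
Your argument is correct and is essentially the standard proof of Ore's characterization; the paper itself gives no proof, deferring to the cited references, and your two-direction case analysis (distinguishing $w=u$ from $w\neq u$, and carefully separating $N(v)$ from $N[v]$) is exactly the argument found there. The only step worth making explicit in the converse is the reduction from ``some proper subset $D''\subsetneq D$ is dominating'' (the paper's definition of non-minimality) to ``$D\smallsetminus\{u\}$ is dominating for some $u$'': pick $u\in D\smallsetminus D''$ and use the fact that any superset of a dominating set is dominating.
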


\begin{corollary}
\label{Ore2}
\cite{RefO} Let $G$ be a graph with no isolated vertices. Then:
\begin{itemize}
\item[\small{$\bullet$}] $G$ has a dominating set $D$ and $D$'s complement $V \smallsetminus D$ is also a dominating set.
\item[\small{$\bullet$}] $\gamma(G) \leq \frac{n}{2}$.
\end{itemize}
\end{corollary}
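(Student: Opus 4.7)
The plan is to derive both bullets as immediate consequences of Theorem~\ref{Ore1} (Ore's characterization of minimal dominating sets). First, I would start from any minimal dominating set $D$ of $G$ (such a $D$ exists since $V$ is trivially dominating and any dominating set contains a minimal one by iteratively discarding superfluous vertices). The goal is to prove that $V \smallsetminus D$ is also a dominating set; once this is in hand, the cardinality bound falls out with one line.

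To show that $V \smallsetminus D$ dominates $G$, I need to verify that every vertex of $V$ either lies in $V \smallsetminus D$ or is adjacent to some vertex of $V \smallsetminus D$. Vertices of $V \smallsetminus D$ dominate themselves trivially, so the real content is showing that every $u \in D$ has a neighbor outside $D$. Here I would invoke Theorem~\ref{Ore1}: for each $u \in D$, either (a) $u$ is an isolate of $D$, meaning $N(u) \subseteq V \smallsetminus D$, or (b) there exists a private neighbor $v \in V \smallsetminus D$ of $u$, so that $u \sim v$. In case (b) we are done. In case (a), I would combine the inclusion $N(u) \subseteq V \smallsetminus D$ with the hypothesis that $G$ has no isolated vertices, which forces $\deg(u) \geq 1$ and hence $N(u)$ nonempty, yielding at least one neighbor of $u$ in $V \smallsetminus D$. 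This is the crux of the argument and also the only place where the no--isolated--vertices assumption is used.

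Having established that both $D$ and $V \smallsetminus D$ are dominating sets whenever $D$ is minimal, I would obtain the bound by specializing $D$ to be a minimum dominating set. This requires the small observation that every minimum dominating set is automatically minimal in the sense of Definition~1, for if some proper subset $D' \subsetneq D$ were dominating, then $|D'| < |D| = \gamma(G)$ would contradict minimality of cardinality; so Theorem~\ref{Ore1} and the argument of the previous paragraph apply. Then $V \smallsetminus D$ is a dominating set of cardinality $n - \gamma(G)$, whence by definition of $\gamma(G)$ we get $\gamma(G) \leq n - \gamma(G)$, i.e., $\gamma(G) \leq n/2$.

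The argument is essentially a routine application of Theorem~\ref{Ore1}, so there is no serious obstacle; the only point that deserves care is the isolate case (a), where one must explicitly use the hypothesis on $G$ to convert the vacuous--looking statement $N(u) \subseteq V \smallsetminus D$ into the existence of an actual neighbor of $u$ in the complement.
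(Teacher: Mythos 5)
Your proof is correct and is the standard argument: the paper itself does not prove this corollary (it defers to Ore \cite{RefO} and Haynes--Hedetniemi--Slater \cite{RefHHS}), and the classical proof found there is exactly yours, namely applying Theorem~\ref{Ore1} to a minimal dominating set and using the no--isolated--vertices hypothesis only in the isolate case. Your handling of the two cases and the reduction of the bound $\gamma(G) \leq n/2$ to the complement being dominating are both sound.
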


Let us, from now on, assume that $G$ contains no isolated vertices. Furthermore, let us denote by $\mathcal{D} = \mathcal{D}(G)$ the set of all (minimum) dominating sets of graph $G$. Then we claim that we can partition the set of vertices $V$ in  three distinct types of vertices, which are defined as follows:

\begin{definition} Let $v \in V$.
\begin{itemize}
\item[\small{$\bullet$}] Vertex $v$ is said to be {\bf{always dominant}} if, for every $D \in \mathcal{D}$, $v \in D$. We denote by $\frak{A} = \frak{A}(G)$ the set of all always dominant vertices of $G$.
\item[\small{$\bullet$}] Vertex $v$ is said to be {\bf{possibly dominant}} if there exists $D \in \mathcal{D}$ such that $v \in D$. We denote by $\frak{P} = \frak{P}(G)$ the set of all possibly dominant vertices of $G$.
\item[\small{$\bullet$}] Vertex $v$ is said to be {\bf{never dominant}} if, for every $D \in \mathcal{D}$, $v \notin D$. We denote by $\frak{N} = \frak{N}(G)$ the set of all never dominant vertices of $G$.
\end{itemize}
Thus, there exists the following partition of $V$, that we are going to call {\bf{domination partition}}, 
\[
V =  \frak{A} \cup \frak{P} \cup \frak{N}.
\]
Note that at least one of the sets $\frak{A}, \frak{P}, \frak{N}$ must be nonempty.
\end{definition}

Below, we write $G - v$ for the induced subgraph $G(V \smallsetminus \{v\})$. 
Our main result is the following:

\begin{lemma}
Let $\gamma$ be the domination number of $G$ and let $v$ be a vertex. 
\begin{description}
\item[(a)] Vertex $v$ is always a dominant vertex if and only if $\gamma(G - v) > \gamma$, i.e.,
\[
\frak{A} = \{v \in V\!: \gamma(G - v) > \gamma\}.
\]
\item[(b)] If $|\frak{A}| = \gamma$, then 
\begin{eqnarray*}
\frak{N} &=& V \smallsetminus \frak{A},\\ 
\frak{P} &=& \varnothing.
\end{eqnarray*}
In other words, if $|\frak{A}| = \gamma$, then there exists a unique (minimum) dominating set of $G$ (i.e., $|\mathcal{D}(G)| = 1$).
\item[(c)] If $|\frak{A}| := \gamma_1 < \gamma$, then there exists a set $\frak{R} \subset V \smallsetminus \frak{A}$, with $|\frak{R}| = \gamma - \gamma_1$ ($> 0$), such that 
$\frak{A} \cup \frak{R} \in  \mathcal{D}(G)$, and, therefore,
\begin{eqnarray*}
\frak{P} &=& \{\frak{R} \subset V \smallsetminus \frak{A}\!: |\frak{R}| = \gamma - \gamma_1 {\mbox{ and }} \frak{A} \cup \frak{R} \in  \mathcal{D}(G)\},\\ 
\frak{N} &=& V \smallsetminus (\frak{A} \cup \frak{P}).
\end{eqnarray*}
\end{description}
\end{lemma}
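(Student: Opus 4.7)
The plan is to prove part (a) first and then derive parts (b) and (c) as essentially set-theoretic consequences.

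For part (a), I argue each direction by contrapositive. The direction $\gamma(G-v) > \gamma \Rightarrow v \in \frak{A}$ is immediate: if some MDS $D$ of $G$ omits $v$, then $D \subseteq V \smallsetminus \{v\}$ is automatically a dominating set of $G-v$ of size $\gamma$, forcing $\gamma(G-v) \le \gamma$ and contradicting the hypothesis. For the reverse $\gamma(G-v) \le \gamma \Rightarrow v \notin \frak{A}$, I take a minimum dominating set $D'$ of $G-v$ and produce an MDS of $G$ that omits $v$. If $D' \cap N(v) \ne \varnothing$, then $D'$ itself dominates $G$ with size $\le \gamma$ and is the desired MDS. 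Otherwise, since $G$ has no isolated vertices (Corollary \ref{Ore2}), I pick $w \in N(v)$; the set $D' \cup \{w\}$ dominates $G$ and has size $|D'| + 1$, which suffices whenever $|D'| < \gamma$. The delicate boundary case $|D'| = \gamma$ with $D' \cap N(v) = \varnothing$ I would rule out via Ore's Theorem \ref{Ore1}: for any MDS $D^*$ of $G$ with $v \in D^*$, the vertex $v$ must have a private neighbor $p \in N(v)$ in $D^*$ (or else $v$ is an isolate of $D^*$ with all its neighbors already covered, in which case $(D^* \smallsetminus \{v\}) \cup \{w\}$ for $w \in N(v)$ is already an MDS of $G$ omitting $v$); the swap $(D^* \smallsetminus \{v\}) \cup \{p\}$ then yields a dominating set of $G-v$ of size $\gamma$ that meets $N(v)$, again contradicting the standing assumption.

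Parts (b) and (c) are routine once (a) is in hand. In (b), $|\frak{A}| = \gamma$ together with $\frak{A} \subseteq D$ and $|D| = \gamma$ for every $D \in \mathcal{D}(G)$ forces $D = \frak{A}$, so $\mathcal{D}(G) = \{\frak{A}\}$ is a singleton; the identities $\frak{P} = \varnothing$ and $\frak{N} = V \smallsetminus \frak{A}$ then follow directly from the three definitions. In (c), with $\gamma_1 := |\frak{A}| < \gamma$, each $D \in \mathcal{D}(G)$ decomposes uniquely as $D = \frak{A} \cup \frak{R}$ with $\frak{R} \subseteq V \smallsetminus \frak{A}$ of size $\gamma - \gamma_1 > 0$; existence of at least one admissible $\frak{R}$ follows from $\mathcal{D}(G) \ne \varnothing$, and $\frak{P}$ is then read off as the union of all admissible $\frak{R}$'s, with $\frak{N} = V \smallsetminus (\frak{A} \cup \frak{P})$.

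The main obstacle is precisely the boundary case in part (a): a minimum dominating set of $G-v$ of size $\gamma$ that happens to be disjoint from $N(v)$. The structural swap using Ore's theorem, case-analyzed on whether $v$ is an isolate of $D^*$ or carries a private neighbor inside it, is the key technical step that rules this configuration out; the remaining pieces of the proof follow from straightforward cardinality bookkeeping.
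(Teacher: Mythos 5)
Your proof of the backward implication ($\gamma(G-v)>\gamma \Rightarrow v \in \frak{A}$) is correct, and parts (b) and (c) are indeed routine consequences of the definitions (they do not even rely on (a)). The paper itself offers no argument for (a) beyond a citation to Gunther--Hartnell--Markus--Rall, so the only substantive issue is your treatment of the forward direction, and there the argument breaks at exactly the spot you flag as delicate. The swap $(D^*\smallsetminus\{v\})\cup\{p\}$ is not in general a dominating set of $G-v$: the set $D^*\smallsetminus\{v\}$ fails to dominate \emph{every} private neighbor of $v$, and adding the single vertex $p$ only recovers $N[p]$. If $v$ has two private neighbors $p,p'$ with $p'\notin N[p]$, then $p'$ remains undominated and the swap produces nothing usable. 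Your parenthetical fallback (the isolate case with no external private neighbor) is fine, but it does not cover this multi--private--neighbor configuration.

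Moreover, the gap is not repairable, because the forward implication is false as stated. Take $V=\{v,a_1,a_2,b,c,\ell_1,\ell_2\}$ with edges $va_1$, $va_2$, $a_1b$, $a_2b$, $bc$, $c\ell_1$, $c\ell_2$. Then $\gamma(G)=2$ and $\{v,c\}$ is the unique minimum dominating set: any $2$--set must contain $c$ to reach both leaves $\ell_1,\ell_2$, and the second vertex must then lie in $N[v]\cap N[a_1]\cap N[a_2]=\{v\}$. Hence $v\in\frak{A}$. Yet $\{b,c\}$ dominates $G-v$, so $\gamma(G-v)=2=\gamma$, not $>\gamma$. Here $v$ has the two nonadjacent private neighbors $a_1,a_2$ with respect to $\{v,c\}$, which is precisely the configuration your swap cannot handle. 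So only the inclusion $\{v\in V:\gamma(G-v)>\gamma\}\subseteq\frak{A}$ holds in general, and no proof of the stated equivalence can succeed; the defect lies in the statement (and in the paper's bare appeal to the cited reference), not merely in your argument.
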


\begin{proof} Part (a) follows from the ``uniqueness'' proof of Gunther, Hartnell, Markus and Rall \cite{RefGHM}. Part (b) is plain to see, while (c) is a consequence of Ore's fundamental results (Theorem 1 and Corollary 1).
\end{proof}


\begin{definition}
For a graph $G$, the {\bf{index of domination multiplicity}} $m = m(G)$ is defined as
\[
m = 1  - \frac{|\frak{A}|}{\gamma}.
\]
\end{definition}

Clearly, $m$ varies in $[0, 1]$ and it is $m = 0$, for any graph with unique dominating set, while $m = 1$, for a graph without always dominant  vertices (for instance, this is the case with cycle graphs). In this way, when an applied analyst knows the domination number together with the index of domination multiplicity, she possesses comprehensive information in order to be able to infer  the ``complexity'' of the network structure at the global level. Of course, in practical situations, this knowledge works like a new sort of centrality index, which, dissimilarly to the local computation for most centrality indices, now it is based on a combinatorial assessment of all possible structural circumstances enabling the constitution of the domination partition of an empirical network at hand. In this context, the number of always dominant vertices is inversely proportional to the size of the domination index and directly proportional to the graph domination number.

In our numerical computations, after using the SageMath software \cite{sage} to compute the domination number of a graph,  we are computing all the minimum dominating sets of the graph through the following two algorithms that we have implemented in Python.

\begin{algorithm}[H]
    \SetKwInOut{Input}{Input}
    \SetKwInOut{Output}{Output}
    \underline{function AllwaysDom} $(G,\gamma)$\;
    \Input{Graph $G$ and domination number $\gamma$ of $G$.}
    \Output{The set $\frak{A}(G)$ of the always dominant vertices.}
    \For{$\node \in G.\nodes$}
    {\label{forins}
        \If{$\gamma(G \smallsetminus \node) > \gamma$}
    {
        $\frak{A}(G) \leftarrow \node$\;
    }}
\caption{Algorithm for finding the set $\frak{A}(G)$ of all always dominant vertices of graph $G$, when the domination number $\gamma = \gamma(G)$ is known.}

\end{algorithm}
\begin{algorithm}[H]
    \SetKwInOut{Input}{Input}
    \SetKwInOut{Output}{Output}
    \underline{function AllDoms} $(G,\frak{A}(G),\gamma)$\;
    \Input{Graph $G$, the set $\frak{A}$ of all always dominant vertices and the domination number $\gamma$ of $G$.}
    \Output{The collection $\mathcal{D}$ of all minimum dominating sets of $G$.}
    $r \leftarrow \gamma  - |\frak{A}|$\;
    merger $\leftarrow$ the set all subsets of $G \smallsetminus \frak{A}$ of size $r$\;
    \For{$S \in \merger$}
    {\label{forins}
       $S \leftarrow S \cup \frak{A}$\;
    {
        \If{$S \mathrm{ \ is \ Dominating \ Set}$}
        {$\mathcal{D} \leftarrow S$}
    }}
\caption{Algorithm for finding the collection $\mathcal{D} = \mathcal{D}(G)$ of all the minimum dominating sets of graph $G$, when the domination number $\gamma = \gamma(G)$ is known.}
\end{algorithm}

\section{Ego--Centered Decompositions in Graphs}
\label{egod}

\begin{definition}
Let $G = (V, E)$ be a simple undirected graph, $u \in U$ and $U \subset V$. The subgraph $G(U)$ induced by $U$ is called {\bf{ego--centered}} subgraph if 
\[
U = N[u],
\]
in which case vertex $u$ is called the {\bf{ego}} of $G(N[u])$ and all vertices $w \in N(u)$ are called {\bf{alters}} of $G(N[u])$. Moreover, an alter $w \in N(u)$ is called:
\begin{itemize}
\item[\small{$\bullet$}] {\bf{private alter}}  if $N(w) \subset N[u]$ and
\item[\small{$\bullet$}] {\bf{public alter}}  if $N(w) \smallsetminus N[u] \neq \varnothing$.
\end{itemize}
\end{definition}

For $u_1, u_2$ two vertices in $G$, let us consider the two ego--centered subgraphs  $G(N[u_1]), G(N[u_2])$.
\begin{itemize}
\item[\small{$\bullet$}] If $w \in N[u_1] \cap N[u_2]$ ($\neq \varnothing$), then vertex $w$ is called {\bf{shared alter}} by the two ego--centered subgraphs. Note that, as far as $u_1 \neq u_2$, a shared alter is necessarily a public alter.
\end{itemize}
Furthermore, let $w_1 \in N(u_1), w_2 \in N(u_2)$ be two alters such that $w_1 \neq w_2$ and $w_1 \sim w_2$. Note that, whenever $u_1 = u_2$, $w_1,  w_2$ become alters in the same ego--centered subgraph. Then we have the following cases:
\begin{itemize}
\item[\small{$\bullet$}] If $u_1 \neq u_2$ and $u_1 \sim u_2$, then edge $(u_1, u_2)$ is called {\bf{bridge of egos}}.
\item[\small{$\bullet$}] If $w_1$ is a private alter in $G(N[u_1])$ and $w_2$ is a private alter in $G(N[u_2])$, then edge $(w_1, w_2)$ is called {\bf{bridge of private alters}}.
\item[\small{$\bullet$}] If $w_1$ is a public alter in $G(N[u_1])$ and $w_2$ is a public alter in $G(N[u_2])$, then edge $(w_1, w_2)$ is called {\bf{bridge of public alters}}.
\item[\small{$\bullet$}] If $w_1$ is a private alter in $G(N[u_1])$ and $w_2$ is a public alter in $G(N[u_2])$, then edge $(w_1, w_2)$ is called {\bf{bridge of private--to--public alters}}.
\end{itemize}

\begin{definition}
For any minimum dominating set $D \in \mathcal{D}(G)$, the family of ego-centered subgraphs $\{G(N[u])\}_{u \in D}$ is called {\bf{$D$--dominating ego--centered decomposition}} of graph $G$.
Moreover, this decomposition induces the following partitions of the vertex set $V$ and the edge set $E$:
\begin{eqnarray*}
V &=& D \cup V_{\private} \cup V_{\public},\\
E &=& E_{\ego} \cup E_{\private} \cup E_{\public} \cup E_{\private-\public},
\end{eqnarray*}
where the sets of private and (shared) public vertices, $V_{\private} = V_{\private}(D), V_{\public} = V_{\public}(D)$, and the sets of ego, private, public and private--public bridges, $E_{\ego} = E_{\ego}(D), E_{\private} = E_{\private}(D), E_{\public} = E_{\public}(D), E_{\private-\public} = E_{\private-\public}(D)$, are defined as follows:
\begin{eqnarray*}
V_{\private} &=& \{w \in V \smallsetminus D\!: \exists  u \in D {\mbox{ such that }} w \in N(u) \smallsetminus D \\
 & & \,\, {\mbox{ and }}  N(w) \subset N[u]\},\\
V_{\public} &=& \{w \in V \smallsetminus D\!: \exists  u_1, u_2 \in D {\mbox{ such that }} u_1 \neq u_2, \\
 & & \,\,\,\, N(u_1) \cap N(u_2) \neq  \varnothing {\mbox{ and }} w \in N(u_1) \cap N(u_2)\},\\
E_{\ego} &=& \{(u_1, u_2) \in E\!: u_1 \neq u_2 {\mbox{ and }} u_1, u_2 \in D\},\\
E_{\private} &=& \{(w_1, w_2) \in E\!: w_1 \neq w_2 {\mbox{ and }} w_1, w_2 \in V_{\private}\},\\
E_{\public} &=& \{(w_1, w_2) \in E\!: w_1 \neq w_2 {\mbox{ and }} w_1, w_2 \in V_{\public}\},\\
E_{\private-\public} &=& \{(w_1, w_2) \in E\!: w_1 \neq w_2 {\mbox{ and }} w_1 \in V_{\private}, w_2 \in V_{\public}\}.
\end{eqnarray*}
\end{definition}

\begin{proposition}
For any $D$--dominating ego--centered decomposition of graph $G$ with domination number $\gamma$,
\begin{eqnarray*}
|V| - |V_{\private}| - |V_{\public}|  &=& \gamma ,\\
|E| - |E_{\private}| - |E_{\public}|  &=& \sum_{u \in D} \deg(u) - |E_{\ego}| + |E_{\private-\public}|.
\end{eqnarray*}
\end{proposition}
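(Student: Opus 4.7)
The plan is to prove the two identities separately by careful bookkeeping on the vertex and edge partitions induced by the $D$--dominating ego--centered decomposition.

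\textbf{First identity.} I would begin from the definitional vertex partition
\[
V = D \cup V_{\private} \cup V_{\public}.
\]
Disjointness of $D$ from $V_{\private}$ and $V_{\public}$ is immediate since the latter two sets are defined as subsets of $V \smallsetminus D$. That every $w \in V \smallsetminus D$ lands in $V_{\private} \cup V_{\public}$ is precisely the dominating property of $D$: at least one ego $u \in D$ satisfies $w \in N(u)$, and according to whether $w$ has a second ego neighbor in $D$, or else has all its neighbors inside $N[u]$, it falls into $V_{\public}$ or into $V_{\private}$ respectively. Since $D \in \mathcal{D}(G)$ we have $|D| = \gamma$, and summing cardinalities yields $|V| = \gamma + |V_{\private}| + |V_{\public}|$, which rearranges to the first identity.

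\textbf{Second identity.} The four edge sets listed in the definition do not by themselves cover $E$, since nothing there accounts for edges joining an ego $u \in D$ to an alter $w \in V \smallsetminus D$. I would therefore introduce the auxiliary set
\[
E_{\mathrm{mix}} = \{(u, w) \in E\!: u \in D,\; w \in V \smallsetminus D\}
\]
and record the refined partition
\[
E = E_{\ego} \cup E_{\mathrm{mix}} \cup E_{\private} \cup E_{\public} \cup E_{\private-\public}.
\]
A handshake count restricted to $D$ then gives
\[
\sum_{u \in D} \deg(u) = 2\,|E_{\ego}| + |E_{\mathrm{mix}}|,
\]
since each edge of $E_{\ego}$ has both endpoints in $D$ while each edge of $E_{\mathrm{mix}}$ has exactly one (and the remaining three edge classes contribute nothing to this sum). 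Solving for $|E_{\mathrm{mix}}|$ and substituting into the identity $|E| - |E_{\private}| - |E_{\public}| = |E_{\ego}| + |E_{\mathrm{mix}}| + |E_{\private-\public}|$ delivers the second claim.

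The main obstacle is the interface bookkeeping, rather than any deep graph--theoretic input. Specifically, one must verify that $V_{\private} \cap V_{\public} = \varnothing$ under the intended reading of the definitions (a private alter of $u$ whose hypothetical additional ego neighbor $u'$ would have to satisfy $u' \in N(w) \subset N[u]$ requires a short case analysis), and likewise that no edge belongs to two of the five edge classes simultaneously. Once those disjointness checks are in place, both identities reduce to elementary counting on the partition sizes, with no further structural input beyond the defining property $|D| = \gamma$ of a minimum dominating set.
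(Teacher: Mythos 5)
Your route is the same as the paper's (the paper compresses everything into ``direct consequence of the definitions and the degree sum formula''), and your handling of the second identity is the right way to unpack that: introducing $E_{\mathrm{mix}}$, the handshake count $\sum_{u\in D}\deg(u)=2|E_{\ego}|+|E_{\mathrm{mix}}|$, and the substitution are all correct given the partitions you assume.

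The gap is in your coverage argument for the vertex partition. You present the dichotomy ``$w$ has a second ego neighbour, hence $w\in V_{\public}$; otherwise all of $w$'s neighbours lie in $N[u]$, hence $w\in V_{\private}$'' as exhaustive, but it is not: an alter dominated by a unique ego $u$ may still have a non-ego neighbour outside $N[u]$. Concretely, in the path $1-2-3-4-5-6$ with $D=\{2,5\}$, vertex $3$ is adjacent only to the ego $2$, yet $4\in N(3)\smallsetminus N[2]$, so $3$ is not a private alter; and since $N(2)\cap N(5)=\varnothing$ it is not a shared alter either, so under the literal definition of $V_{\public}$ in Definition~5 it lies in neither class and the first identity reads $6-2-0\neq 2$ (the edge $(3,4)$ likewise escapes all five edge classes and breaks the second identity). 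Both identities are rescued only by reading $V_{\public}$ as the set of \emph{public alters} in the sense of Definition~4, i.e.\ alters $w$ of some ego $u\in D$ with $N(w)\smallsetminus N[u]\neq\varnothing$, which is complementary to privacy within each ego's neighbourhood and is surely the intended meaning; under that reading your dichotomy becomes correct, but you must say so explicitly rather than assert the false implication. The disjointness issue you flag is also real and is not settled by the ``short case analysis'' you gesture at: a vertex $w$ with $N(w)\subset N[u]$ can simultaneously be adjacent to a second ego $u'\in D$ with $u'\sim u$, so it is both private (via $u$) and shared; to make the cardinality count valid one must either rule this out for the decomposition at hand or adopt a precedence convention (e.g.\ public overrides private), and your write-up should commit to one of these.
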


\begin{proof}
This is a direct consequence of the previous definitions and the degree sum formula (handshaking lemma) for graph $G$.
\end{proof}

\begin{corollary}
Let $\{G(N[u])\}_{u \in D}$ be a $D$--dominating ego--centered decomposition of $G$. Then we have:
\begin{itemize}
\item[\small{$\bullet$}] $D$ is an independent dominating set if and only if $E_{\ego} = \varnothing$.
\item[\small{$\bullet$}] $G$ is the disjoint union of the ego-centered subgraphs  $\{G(N[u])\}_{u \in D}$, i.e., $G = \sum_{u \in D} G(N[u])$, if and only if $V_{\public} = E_{\ego} = E_{\public} = E_{\private-\public} = \varnothing$,  in which case $G$ is disconnected to a forest of ego--centered subgraphs.
\end{itemize}
\end{corollary}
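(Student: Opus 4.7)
The plan is to verify each bullet by unpacking the definitions of the sets $V_{\private}, V_{\public}, E_{\ego}, E_{\public}, E_{\private-\public}$ introduced in the preceding Definition, together with the partition of $V$ supplied by Proposition~1.

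For the first bullet I would observe that, by definition, $E_{\ego}$ is exactly the set of edges both of whose endpoints lie in $D$. So $E_{\ego} = \varnothing$ is literally the assertion that $D$ is an independent set in $G$. Since the hypothesis already places $D \in \mathcal{D}(G)$ as a (minimum) dominating set, being independent is equivalent to being an independent dominating set in the sense of Definition~1. This direction is essentially a tautology once the definitions are aligned.

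For the second bullet, both directions need to be established. In the forward direction, pairwise disjointness of the induced subgraphs $G(N[u])$ forces $N[u_1] \cap N[u_2] = \varnothing$ whenever $u_1 \neq u_2$ in $D$; vertex-disjointness yields $V_{\public} = \varnothing$, and then $E_{\public} = E_{\private-\public} = \varnothing$ since these sets consist of edges with at least one endpoint in $V_{\public}$. An edge $(u_1, u_2)$ with both endpoints in $D$ would place $u_2 \in N[u_1] \cap N[u_2]$, contradicting disjointness, so $E_{\ego} = \varnothing$ as well. For the reverse direction I would first use $V_{\public} = \varnothing$ together with the partition $V = D \cup V_{\private} \cup V_{\public}$ from Proposition~1 to conclude that every $w \notin D$ belongs to $V_{\private}$; unpacking the definition of $V_{\private}$, there is then a unique $u \in D$ with $w \in N(u) \setminus D$, and moreover $N(w) \subset N[u]$. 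Combined with $E_{\ego} = \varnothing$, which forbids any $D$-vertex from sitting in a foreign $N[u']$, this yields $N[u_1] \cap N[u_2] = \varnothing$ for distinct $u_1, u_2 \in D$. Cross-edges between distinct $G(N[u])$'s are then ruled out by a short case analysis on the $D$/$V_{\private}$/$V_{\public}$ membership of the two endpoints: the ego--ego case is handled by $E_{\ego} = \varnothing$, the ego--to--private case would contradict the uniqueness of the assigning ego (and hence $V_{\public} = \varnothing$), and the private--to--private case is blocked by the containment $N(w) \subset N[u]$.

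The step I expect to be most delicate is this reverse direction of the second bullet, specifically the extraction from $V_{\public} = \varnothing$ of the fact that every non-ego vertex has all of its neighbors inside a single $N[u]$; this is what converts the purely vertex-level hypothesis $V_{\public} = \varnothing$ into an edge-level statement about the absence of cross-edges between ego-centered subgraphs. Once that is in hand, the absence of cross-edges cascades from the four emptiness hypotheses. Finally, the closing phrase that $G$ ``disconnects to a forest of ego-centered subgraphs'' should be read as $G$ being the graph-theoretic disjoint union $\sum_{u \in D} G(N[u])$, rather than a literal forest in the usual graph-theoretic sense, and this assertion is then immediate from the disjointness of vertex sets and the absence of cross-edges that the previous steps supply.
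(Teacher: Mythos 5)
The paper offers no explicit proof of this corollary---it is presented as an immediate consequence of the preceding definitions and of Proposition~1---so your write-up is in effect the intended argument spelled out in full. Your treatment of the first bullet and of the forward direction of the second bullet is correct and unproblematic. However, the step you yourself single out as delicate in the reverse direction is a genuine gap, and in fact it is exactly the point at which the statement (read against the paper's literal definitions) breaks down. You deduce from $V_{\public}=\varnothing$ and the claimed partition $V = D \cup V_{\private} \cup V_{\public}$ that every non-ego vertex lies in $V_{\private}$ and hence satisfies $N(w)\subset N[u]$ for its unique ego $u$. But $V_{\public}$ is defined in the paper as the set of \emph{shared} alters (vertices adjacent to at least two egos), whereas membership in $V_{\private}$ requires $N(w)\subset N[u]$; a vertex adjacent to exactly one ego but possessing a neighbour outside that ego's closed neighbourhood belongs to neither set, so the ``partition'' is not exhaustive and Proposition~1 cannot legitimately be invoked to close this hole.

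Concretely, let $G$ be the path $v_1\sim v_2\sim v_3\sim v_4$ with the minimum dominating set $D=\{v_1,v_4\}$. Here $N[v_1]=\{v_1,v_2\}$ and $N[v_4]=\{v_3,v_4\}$, and one checks directly that $V_{\private}=V_{\public}=E_{\ego}=E_{\public}=E_{\private-\public}=\varnothing$, yet $G$ is connected and the edge $(v_2,v_3)$ lies in neither ego--centered subgraph, so $G\neq G(N[v_1])+G(N[v_4])$. Thus the right-to-left implication of the second bullet fails as literally stated, and no argument along your lines (or any other) can succeed without amending the definitions. The natural repair is to take $V_{\public}$ to be the set of public alters in the sense of the earlier definition, i.e.\ those $w\in N(u)\smallsetminus D$ with $N(w)\smallsetminus N[u]\neq\varnothing$ for every ego $u$ adjacent to $w$; with that reading the vertex partition is exhaustive, your extraction of $N(w)\subset N[u]$ for every non-ego $w$ becomes valid, and the rest of your case analysis (ego--ego via $E_{\ego}=\varnothing$, ego--private and private--private via the containment $N(w)\subset N[u]$) correctly rules out all cross-edges.
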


Let us also remark that, for each graph $G$, one may possibly choose different minimum dominating sets $D$ and, thus, different $D$--dominating ego-centered decompositions. However, all these choices depend on the type of the graph $G$ through the value of the index of domination multiplicity $m$ for $G$. Apparently, when $m$ tends to $0$, then  there are fewer choices of minimum dominating sets than when $m$ tends to $1$. Moreover, in case of multiple minimum dominating sets, Proposition 1 suggests that one may select a $D$--dominating ego-centered decomposition according to whether this decomposition maximizes or minimizes the following quantities:
\begin{itemize}
\item[(1)] $\sum_{u \in D} \deg(u)$, i.e., the density of connections between egos and alters;
\item[(2)] $|E_{\ego}|$, i.e., the density of connections among egos;
\item[(3)] $|E_{\private}|$, i.e., the density of connections among private alters;
\item[(4)] $|E_{\public}|$, i.e., the density of connections among public alters;
\item[(5)] $|E_{\private-\public}|$, i.e., the density of connections between private and public alters.
\end{itemize}
One may notice that criterion (1) represents a condition on the total valence of egos and criterion (5) focuses on the separability among the ego--centered subnetworks within a given decomposition. Moreover, the three criteria (2), (3) and (4) are based on the cohesiveness of the set of egos, private alters or public alters, respectively. As a matter of fact, (2), (3) and (4) measure the number of triangles with one of their vertices being an ego and the other two being private alters or public alters, respectively. Similarly, (5) measures the number of 3--paths between pairs of egos. Nevertheless, according to Proposition 1, the five criteria are interdependent in the sense that, for a given network $G$, they cannot be all maximized or minimized simultaneously, since $\sum_{u \in D} \deg(u) + |E_{\private}| + |E_{\public}| - |E_{\ego}| + |E_{\private-\public}| = |E|$.

\section{Applications}
\label{appls}

\subsection{Hage \& Harary's Voyaging Network}
\label{appls1}

The first example of our computation is on Hage and Harary's \cite{RefHH} voyaging network among the 14 western Carolines Islands: Satawal, Ulithi, Woleai, Puluwat, Faraulep, Fais, Pulusuk, Pulap, Elato, Ifaluk, Sorol, Namonuito, Eauripik and Lamotrek. For the corresponding graph $G$, the domination number is equal to 3, there are two always dominant vertices (Elato and Fais) and two possibly dominant vertices (Puluwat and Pulap). The domination partition of this network is plotted in Figure 1.

\begin{figure}[H]
\resizebox{1.4\columnwidth}{!}{
\hspace*{-5cm}
\includegraphics{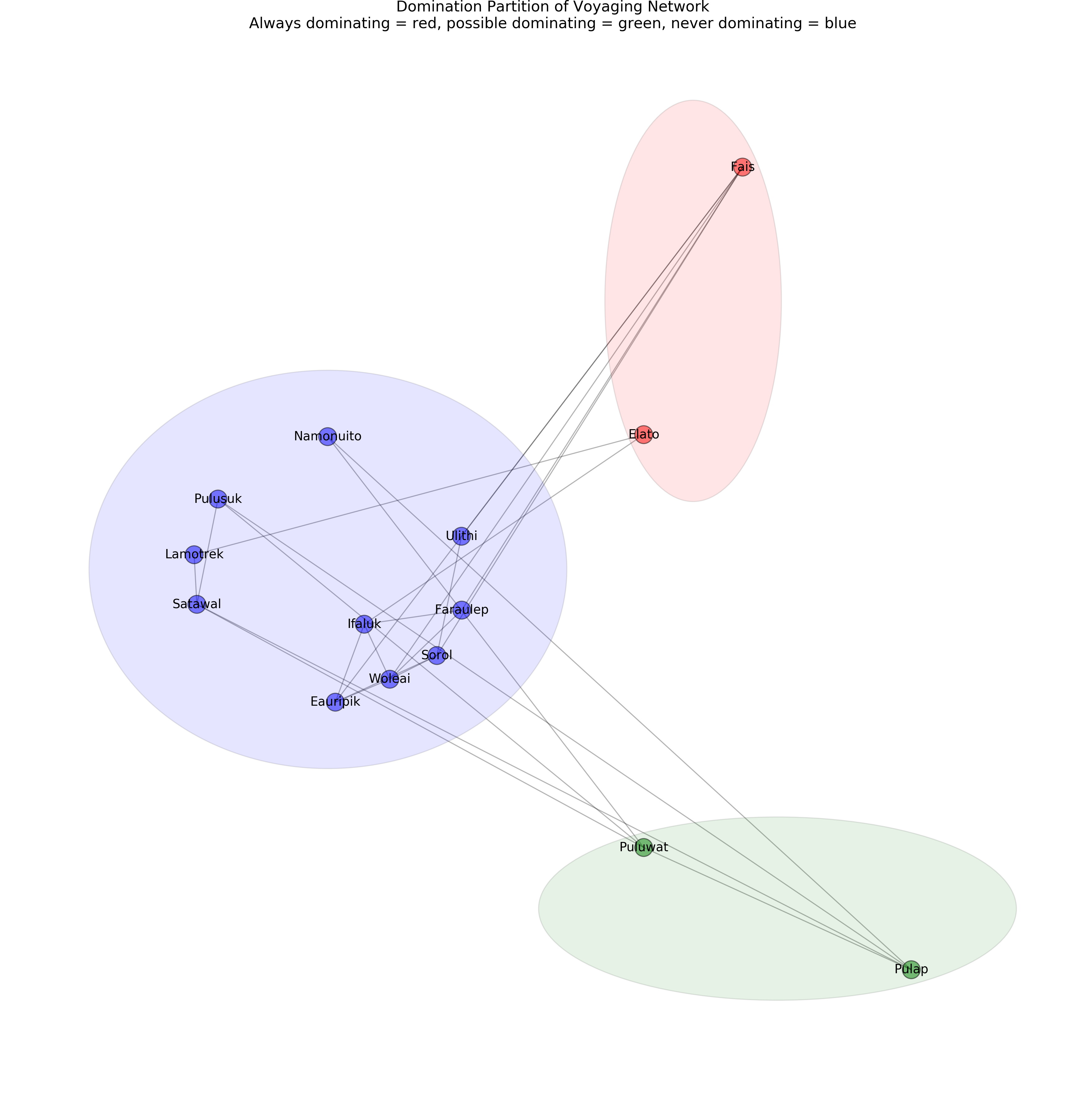}
}
\caption{The Hage \& Harary's Voyaging Network.}
\label{fig:vn}
\end{figure}

In Harary's voyaging network, the index of domination multiplicity is $m = 0.333$ and the properties of the two minimum dominating sets are shown in Table 1. We observe that both minimum dominating sets [Elato, Fais, Puluwat] and [Elato, Fais, Pulap] are independent sets, their egos have sum of degrees equal to 11 and the number of bridges among their private--public alters is 4. In other words, this is an illustration of a relatively simple network (low domination number, low multiplicity number) with two minimum dominating sets that seem to be more or less equally desirable since the rows for them in Table 1 are identical.

\begin{table}[H]
\small 
\caption{The Minimum Dominating Sets (MDSs) of the Hage \& Harary's voyaging network (E = Elato, F = Fais, Pulu = Puluwat, Pul = Pulap).}
\label{table:vn}
\begin{tabular}{lccccccc}
\bf{MDSs}  & $|V_{\private}|$ & $|V_{\public}|$ & $\sum_{v \in S} \deg(v)$ & $|E_{\private}|$ & $|E_{\public}|$ & $|E_{\private-\public}|$ & $|E_{\ego}|$ \\

E, F, Pulu &  5 & 6 & 11 & 3 & 6 & 4 & 0 \\ 

E, F, Pul &  5 & 6 & 11 & 3 & 6 & 4 & 0 \\ 

\end{tabular}
\end{table}

For those who find examples helpful in understanding concepts, let us indicate the egos, private and public alters corresponding to the decompositions induced by the two minimum dominating sets of Harary's voyaging network. 

\begin{figure}[H]
\resizebox{1\columnwidth}{!}{
\includegraphics{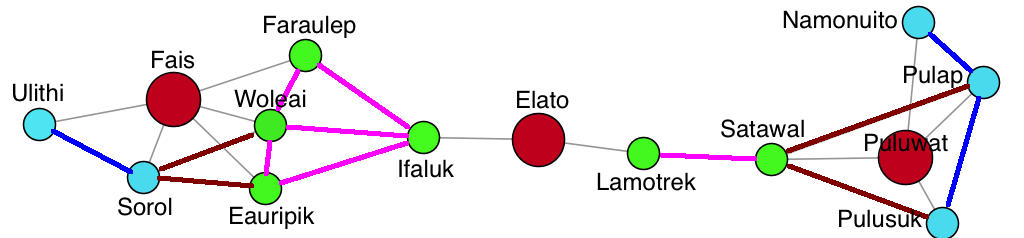}
}
\caption{Egos and alters in the Hage \& Harary's voyaging network.}
\label{fig:kk}
\end{figure} 

For the first minimum dominating set, Elato, Fais and Puluwat are the three egos (colored red in Figure 2). Ego Elato has two alters, Haluk and Lamotrek, both being public alters (colored green). Ego Fais has five alters, two of them, Ulithi and Sorol, being private alters (colored cyan) and the other three, Eauripik, Woleai  and Farauleo, being public alters. Ego Puluwat has four alters, one of them, Satwal, being public alter and three, Pulap, Namonuito and Pulusuk, being private alters. Thus, in total, the minimum dominating set of  Elato, Fais and Puluwat has five private alters (Ulithi, Sorol, Pulap, Namonuito and Pulusuk) and six public alters (Haluk, Lamotrek, Eauripik, Woleai, Farauleo and Satwal).

The second minimum dominating set is composed of three egos, Elato, Fais and Pulap. The alters of egos Elato and Fais have been already identified. Ego Pulap has also four alters, one of them, Satwal, being public alter and three, Puluwat, Namonuito and Pulusuk, being private alters. Thus, in total, the minimum dominating set of  Elato, Fais and Puluwat has five private alters (Ulithi, Sorol, Puluwat, Namonuito and Pulusuk) and six public alters (Haluk, Lamotrek, Eauripik, Woleai, Farauleo and Satwal).

\subsection{Krackhardt's Kite}
\label{appls2}

Next, we examine Krackhardt's Kite network \cite{RefKn} composed of 10 vertices. The corresponding graph $G$ has domination number  equal to 2, but there is a unique minimum dominating set and, thus, there are no possibly dominant vertices. The domination partition of $G$ is plotted in Figure 3. 

\begin{figure}[H]
\resizebox{1.1\columnwidth}{!}{
\hspace*{-5cm}
\includegraphics{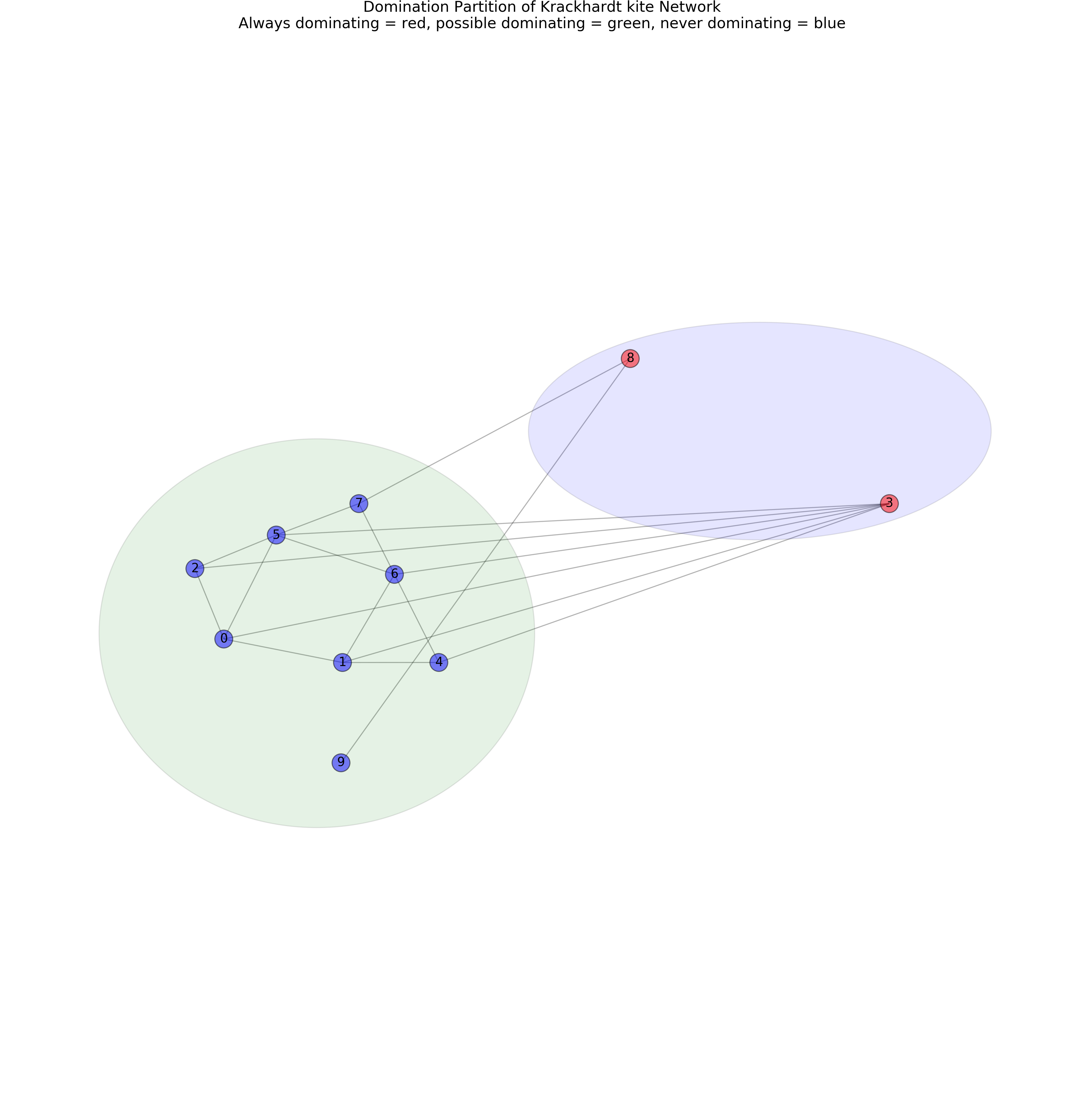}
}
\caption{The Krackhardt kite.}
\label{fig:kk}
\end{figure} 

The index of domination multiplicity of Krackhardt's Kite is $m = 0$ and the properties of the unique minimum dominating set are shown in Table 2: [3, 8] is an independent dominating set, the single ego has sum of degrees equal to 8 and the number of bridges among  private--public alters is 4.

\begin{table}[H]
\small
\caption{The Minimum Dominating Sets (MDSs) of the Krackhardt kite network (with the traditional labelling of vertices: 1 = Andre, 2 = Beverley, 3 = Carol, 4 = Diane, 5 = Ed, 6 = Fernando, 7 = Garth, 8 = Heather, 9 = Ike, 10 = Jane).}
\begin{tabular}{lccccccc}
\bf{MDSs}  & $|V_{\private}|$ & $|V_{\public}|$ & $\sum_{v \in S} \deg(v)$ & $|E_{\private}|$ & $|E_{\public}|$ & $|E_{\private-\public}|$ & $|E_{\ego}|$ \\

[3, 8] &  5 & 3 & 8 & 3 & 3 & 4 & 0 \\ 
    
\end{tabular}
\end{table}

\subsection{Florentine Families}
\label{appls3}

The third example of our computation is the Florentine Families network in the form that Breiger and Pattison have used \cite{RefBP} (extracted from a subset of data on the social relations among Renaissance Florentine families collected by John Padgett). The 15 vertices of this network are the Florentine families: Acciaiuoli, Albizzi, Barbadori, Bischeri, Castellani, Ginori, Guadagni,  Lamberteschi, Medici, Pazzi, Peruzzi, Ridolfi, Salviati, Strozzi and Tornabuoni. The corresponding graph $G$ has domination number equal to 5, there is a single always dominant vertex (Medici) and 9  possibly dominant vertices, shown in the domination partition plot of Figure 4.

\begin{figure}[H]
\resizebox{1.25\columnwidth}{!}{%
\hspace*{-5cm}
\includegraphics{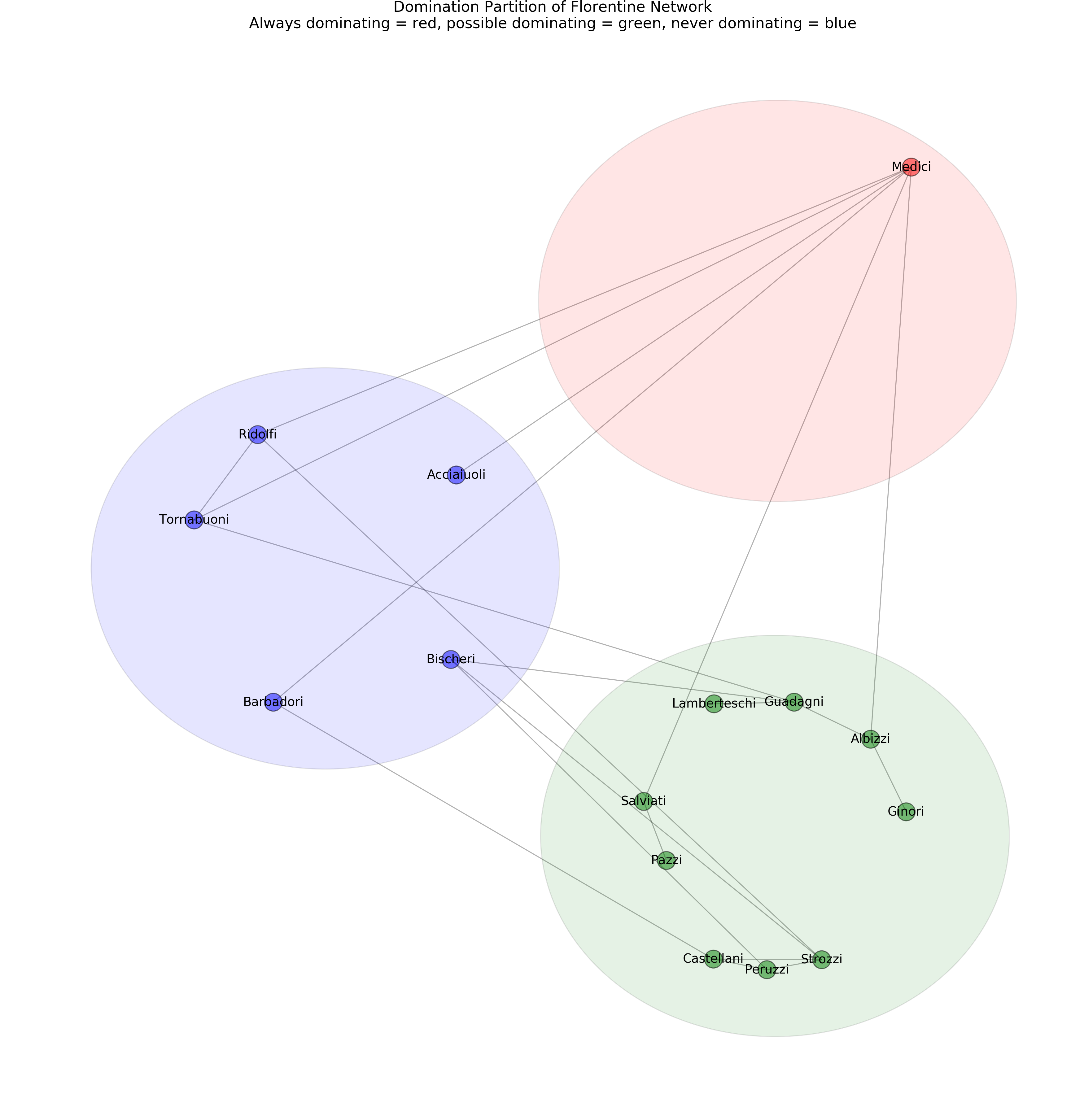}
}
\caption{The Florentine Families network.}
\label{fig:ff}
\end{figure}

Now, the index of domination multiplicity is $m = 0.8$ and the properties of the 20 minimum dominating sets are shown in Table 3. Notice that [Albizzi, Guadagni,  Medici, Salviati, Strozzi] is the dominating set with maximum sum of egos' degrees (equal to 19), there are 5 dominating sets with zero number of bridges among egos (i.e., 5 independent dominating sets) and 12 dominating sets with zero number of bridges among private--public alters.

\begin{table}[H]
\small 
\caption{The Minimum Dominating Sets (MDSs) of the Florentine Families network (0 = Albizzi, 1 = Castellani, 2 = Ginori, 3 = Guadagni, 4 = Lamberteschi, 5 = Medici, 6 = Pazzi, 7 = Peruzzi, 8 = Salviati, 9 = Strozzi).}
\begin{tabular}{lccccccc}
\bf{MDSs}  & $|V_{\private}|$ & $|V_{\public}|$ & $\sum_{v \in S} \deg(v)$ & $|E_{\private}|$ & $|E_{\public}|$ & $|E_{\private-\public}|$ & $|E_{\ego}|$ \\

[1, 2, 3, 5, 6] &  2 & 8 & 15 & 0 & 5 & 0 & 0 \\ 
    
    [0, 4, 5, 6, 7] &  2 & 8 & 14 & 0 & 7 & 0 & 1 \\ 
    
    [2, 4, 5, 8, 9] &  3 & 7 & 14 & 0 & 5 & 2 & 1 \\ 
    
    [0, 3, 5, 8, 9] &  5 & 5 & 19 & 0 & 2 & 2 & 3 \\ 
    
    [2, 3, 5, 6, 7] &  2 & 8 & 15 & 0 & 5 & 0 & 0 \\ 
    
    [2, 3, 5, 7, 8] &  3 & 7 & 16 & 0 & 5 & 0 & 1 \\ 
    
    [0, 3, 5, 6, 7] &  3 & 7 & 17 & 0 & 5 & 0 & 2 \\ 
    
    [2, 4, 5, 6, 7] &  1 & 9 & 12 & 0 & 8 & 0 & 0 \\ 
    
    [2, 3, 5, 8, 9] &  4 & 6 & 17 & 0 & 2 & 2 & 1 \\ 
    
    [0, 4, 5, 7, 8] &  3 & 7 & 15 & 0 & 7 & 0 & 2 \\ 
    
    [2, 3, 5, 6, 9] &  3 & 7 & 16 & 0 & 2 & 2 & 0 \\ 
    
    [0, 4, 5, 8, 9] &  4 & 6 & 16 & 0 & 4 & 2 & 2 \\ 
    
    [0, 1, 3, 5, 6] &  3 & 7 & 17 & 0 & 5 & 0 & 2 \\ 
    
    [2, 4, 5, 6, 9] &  2 & 8 & 13 & 0 & 5 & 2 & 0 \\ 
    
    [0, 3, 5, 7, 8] &  4 & 6 & 18 & 0 & 5 & 0 & 3 \\ 
    
    [2, 4, 5, 7, 8] &  2 & 8 & 13 & 0 & 8 & 0 & 1 \\ 
    
    [0, 4, 5, 6, 9] &  3 & 7 & 15 & 0 & 4 & 2 & 1 \\ 
    
    [0, 1, 3, 5, 8] &  4 & 6 & 18 & 0 & 5 & 0 & 3 \\ 
    
    [0, 3, 5, 6, 9] &  4 & 6 & 18 & 0 & 2 & 2 & 2 \\ 
    
    [1, 2, 3, 5, 8] &  3 & 7 & 16 & 0 & 5 & 0 & 1 \\
    
    \end{tabular}
\end{table}

\subsection{Karate Club}
\label{appls4}

The fourth example is Zachary's Karate Club network \cite{RefZa} composed of 34 vertices. The corresponding graph $G$ has domination number  equal to 4, there are two always dominant vertices (0 and 33) and 6 possibly dominant vertices (5, 6, 16, 24, 25, 31), as they are shown in the domination partition plot of Figure 5.  

\begin{figure}[H]
\resizebox{1.3\columnwidth}{!}{%
\hspace*{-5cm}
\includegraphics{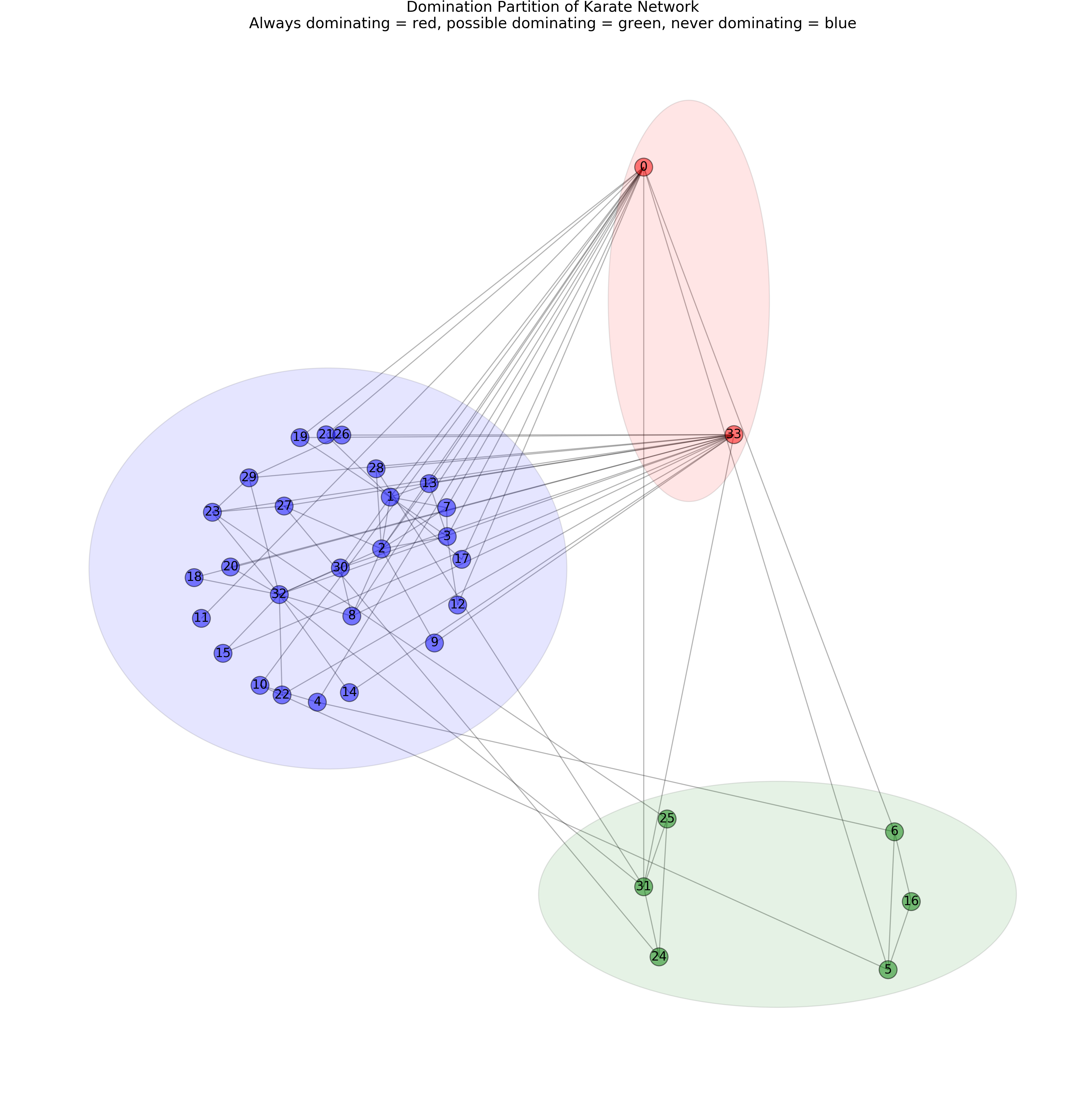}
}
\caption{The Karate Club network.}
\label{fig:kc}
\end{figure} 

In the Karate Club network, the index of domination multiplicity is $m = 0.5$ and the properties of the 9 minimum dominating sets are shown in Table 4. There are 2 dominating sets with maximum sum of egos' degrees (equal to 43), 2 dominating sets with zero number of bridges among egos (i.e., 2 independent dominating sets) and 3 dominating sets with minimum number of bridges among private--public alters (equal to 16).

\begin{table}[H]
\small
\caption{The Minimum Dominating Sets (MDSs) of the Karate Club Network.}
\begin{tabular}{lccccccc}
\bf{MDSs}  & $|V_{\private}|$ & $|V_{\public}|$ & $\sum_{v \in S} \deg(v)$ & $|E_{\private}|$ & $|E_{\public}|$ & $|E_{\private-\public}|$ & $|E_{\ego}|$ \\

[0, 5, 31, 33] &  15 & 15 & 43 & 3 & 18 & 17 & 3 \\ 
    
    [0, 16, 31, 33] &  15 & 15 & 41 & 4 & 19 & 16 & 2 \\ 
    
    [0, 5, 24, 33] &  15 & 15 & 40 & 3 & 19 & 17 & 1 \\ 
    
    [0, 5, 25, 33] &  15 & 15 & 40 & 3 & 19 & 17 & 1 \\ 
    
    [0, 6, 31, 33] &  15 & 15 & 43 & 3 & 18 & 17 & 3 \\ 
    
    [0, 6, 25, 33] &  15 & 15 & 40 & 3 & 19 & 17 & 1 \\ 
    
    [0, 16, 24, 33] &  15 & 15 & 38 & 4 & 20 & 16 & 0 \\ 
    
    [0, 6, 24, 33] &  15 & 15 & 40 & 3 & 19 & 17 & 1 \\ 
    
    [0, 16, 25, 33] &  15 & 15 & 38 & 4 & 20 & 16 & 0 \\ 
    
    \end{tabular}
\end{table}

\subsection{Southern Women}
\label{appls5}

The next example is the Southern Women network \cite{RefDavis}, which is a two-mode network composed of 18 southern women and 14 informal social events. The corresponding bipartite graph $G$ has domination number  equal to 5, there are no always dominant vertices, but there are 18 possibly dominant vertices (8 women and 10 events), as they are shown in the domination partition plot of Figure 6.  

\begin{figure}[H]
\resizebox{1.25\columnwidth}{!}{%
\hspace*{-5cm}
\includegraphics{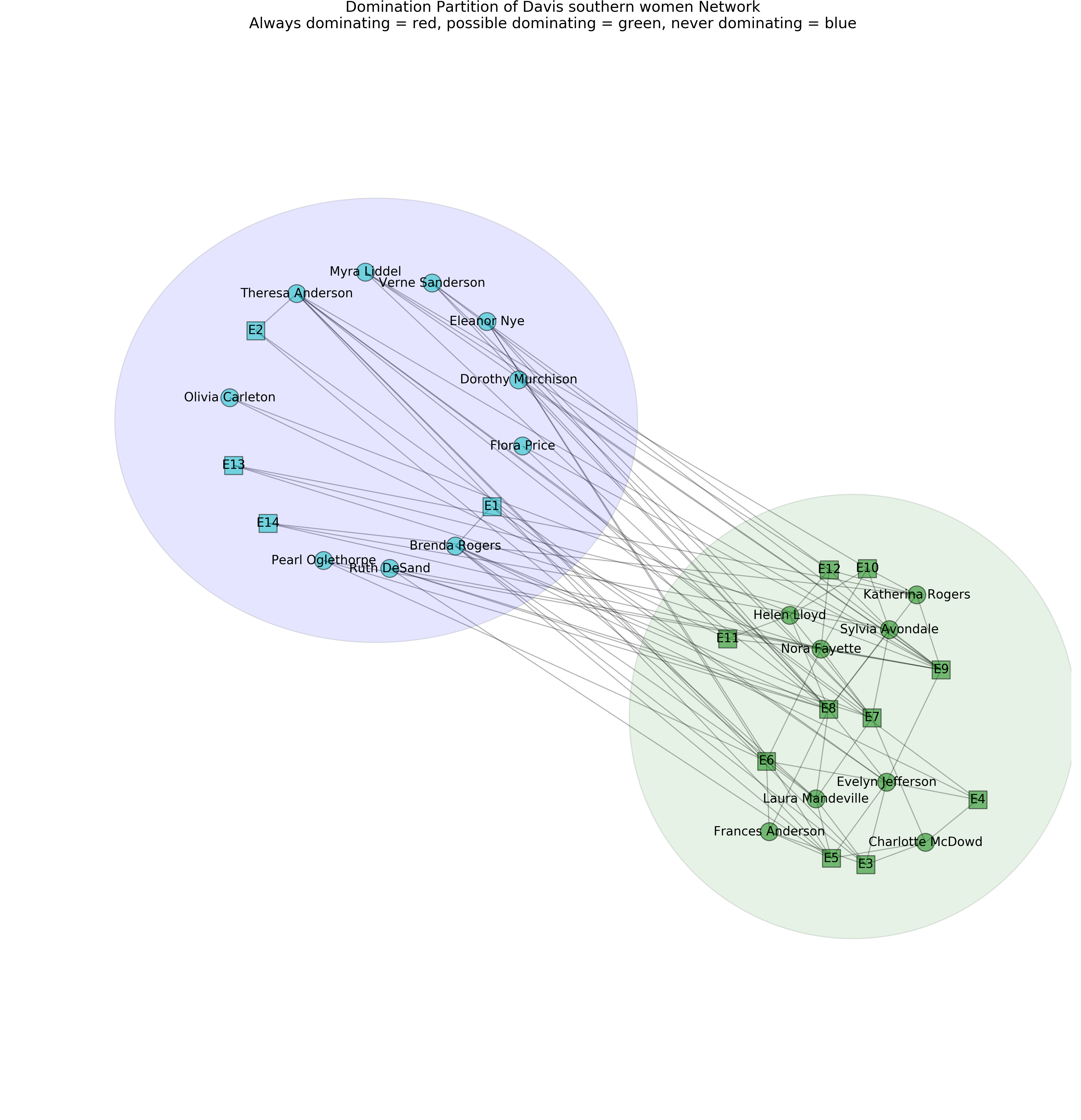}
}
\caption{The Southern Women network.}
\label{fig:lm}
\end{figure}

The index of domination multiplicity of the network of Southern Women is $m = 1$ (since $|\frak{A}| = 0$) and the properties of the 36 minimum dominating sets are shown in Table 5. There is a single dominating set with maximum sum of egos' degrees (equal to 52), all dominating sets have zero number of bridges among egos (because the graph is bipartite) and there are 12 dominating sets with minimum number of bridges among private--public alters (equal to 2).

\begin{table}[H]
\tiny
\caption{The Minimum Dominating Sets (MDSs) of the Southern Women Network.}
\begin{tabular}{lccccccc}
\bf{MDSs}  & $|V_{\private}|$ & $|V_{\public}|$ & $\sum_{v \in S} \deg(v)$ & $|E_{\private}|$ & $|E_{\public}|$ & $|E_{\private-\public}|$ & $|E_{\ego}|$ \\

[0, 1, 6, E8, E11] &  0 & 27 & 38 & 0 & 53 & 0 & 2 \\ 
    
    [1, 6, E3, E7, E9] &  0 & 27 & 44 & 0 & 49 & 0 & 4 \\ 
    
    [1, 7, E7, E8, E11] &  0 & 27 & 43 & 0 & 49 & 0 & 3 \\ 
    
    [1, 6, E7, E8, E11] &  0 & 27 & 44 & 0 & 48 & 0 & 3 \\ 
    
    [5, 7, E4, E8, E11] &  0 & 27 & 36 & 0 & 55 & 0 & 2 \\

    [0, 5, 6, E8, E11] &  0 & 27 & 37 & 0 & 54 & 0 & 2 \\ 
    
    [1, 7, E4, E8, E11] &  0 & 27 & 37 & 0 & 55 & 0 & 3 \\ 
    
    [0, 5, 6, E8, E9] &  0 & 27 & 45 & 0 & 46 & 0 & 2 \\ 
    
    [1, 7, E5, E9, E11] &  0 & 27 & 39 & 0 & 53 & 0 & 3 \\ 
    
    [1, 7, E3, E8, E11] &  0 & 27 & 39 & 0 & 53 & 0 & 3 \\ 
    
    [0,  1, 4, E8, E11] &  0 & 27 & 36 & 0 & 55 & 0 & 2 \\ 
    
    [1, 6, E5, E8, E11] &  0 & 27 & 42 & 0 & 50 & 0 & 3 \\ 
    
    [0, 1, 7, E8, E11] &  0 & 27 & 37 & 0 & 54 & 0 & 2 \\ 
    
    [1, 6, E7, E8, E9] &  0 & 27 & 52 & 0 & 41 & 0 & 4 \\ 
    
    [1, 6, E3, E8, E11] &  0 & 27 & 40 & 0 & 52 & 0 & 3 \\ 
    
    [1, 6, E4, E8, E11] &  0 & 27 & 38 & 0 & 54 & 0 & 3 \\ 
    
    [1, 6, E3, E8, 'E9] &  0 & 27 & 48 & 0 & 45 & 0 & 4 \\ 
    
    [1, 2, 6, E7, E9] &  0 & 27 & 42 & 0 & 50 & 0 & 3 \\ 
    
    [1, 6, E5, E9, E10] &  0 & 27 & 41 & 0 & 52 & 0 & 4 \\ 
    
    [1, 3, 4, E5, E9] &  0 & 27 & 39 & 0 & 53 & 0 & 3 \\ 
    
    [1, 4, E7, E8, E11] &  0 & 27 & 42 & 0 & 49 & 0 & 2 \\ 
    
    [1, 6, E5, E9, E12] &  0 & 27 & 42 & 0 & 51 & 0 & 4 \\ 
    
    [0, 1, 6, E8, E9] &  0 & 27 & 46 & 0 & 46 & 0 & 3 \\ 
    
    [5, 6, E4, E8, E11] &  0 & 27 & 37 & 0 & 54 & 0 & 2 \\ 
    
    [1, 7, E5, E8, E11] &  0 & 27 & 41 & 0 & 51 & 0 & 3 \\ 
    
    [4, 5, E4, E8, E11] &  0 & 27 & 35 & 0 & 56 & 0 & 2 \\ 
    
    [1, 6, E5, E9, E11] &  0 & 27 & 40 & 0 & 53 & 0 & 4 \\ 
    
    [5, 6, E4, E8, E9] &  0 & 27 & 45 & 0 & 46 & 0 & 2 \\ 
    
    [1, 6, E6, E7, E9] &  0 & 27 & 46 & 0 & 48 & 0 & 5 \\ 
    
    [1, 3, 7, E5, E9] &  0 & 27 & 40 & 0 & 52 & 0 & 3 \\ 
    
    [0, 5, 7, E8, E11] &  0 & 27 & 36 & 0 & 55 & 0 & 2 \\ 
    
    [1, 3, 6, E5, E9] &  0 & 27 & 41 & 0 & 51 & 0 & 3 \\ 
    
    [1, 6, E5, E7, E9] &  0 & 27 & 46 & 0 & 47 & 0 & 4 \\ 
    
    [1, 6, E4, E8, E9] &  0 & 27 & 46 & 0 & 47 & 0 & 4 \\ 
    
    [1, 6, E5, E8, E9] &  0 & 27 & 50 & 0 & 43 & 0 & 4 \\ 
    
    [0, 4, 5, E8, E11] &  0 & 27 & 35 & 0 & 56 & 0 & 2 \\
    
    \end{tabular}
\end{table}

\subsection{Les Miserables}
\label{appls6}

The last example is the network  of Les Miserables \cite{RefKLM}, which is the network of co-occurrences of 77 characters in Victor Hugo's novel ``{\it{Les Misérables}}.'' The corresponding graph $G$ has domination number  equal to 10, there are 7 always dominant vertices and 6 possibly dominant vertices, as they are shown in the domination partition plot of Figure 7.

\begin{figure}[H]
\resizebox{1.3\columnwidth}{!}{%
\hspace*{-5cm}
\includegraphics{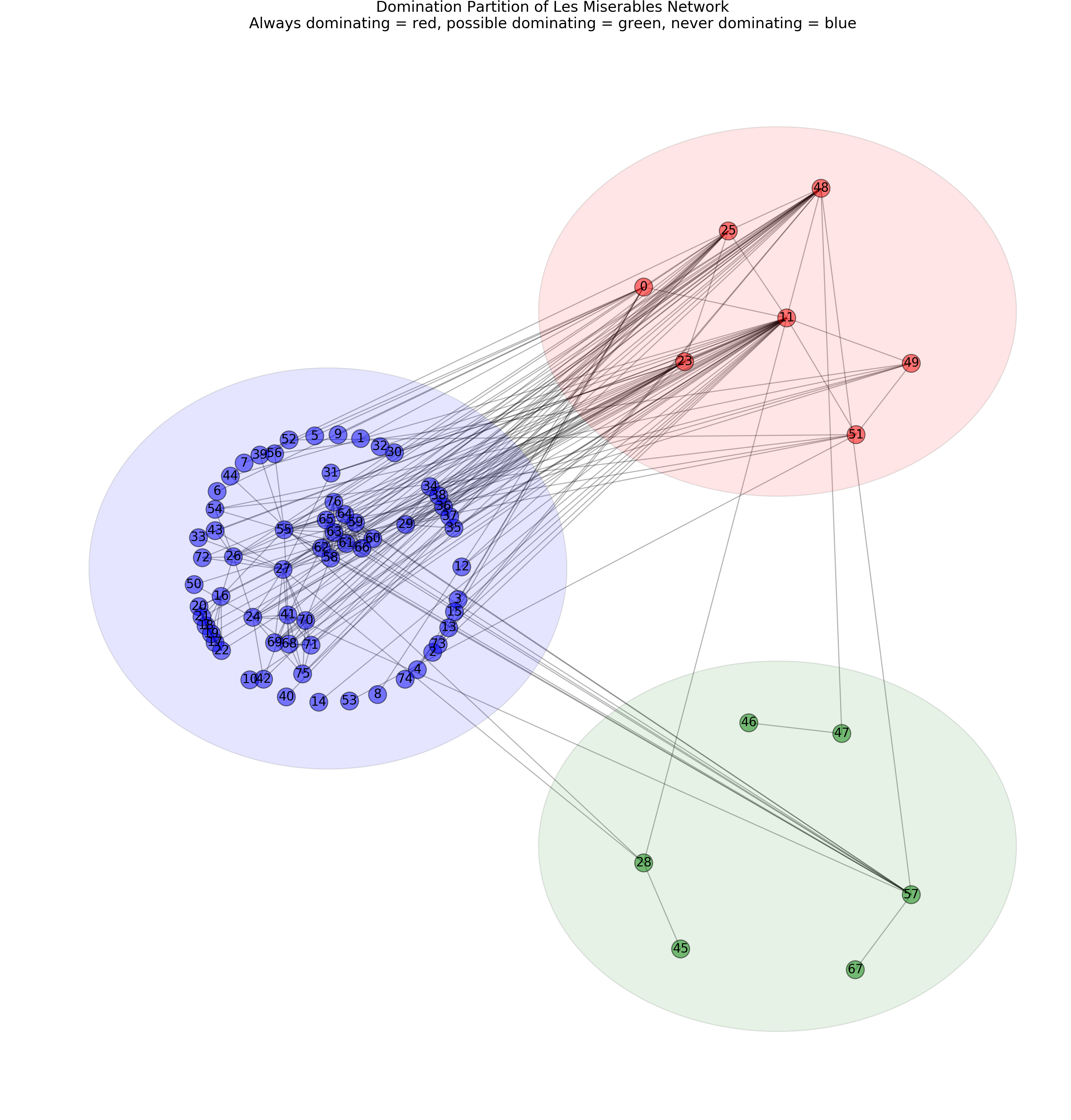}
}
\caption{The network of Les Miserables.}
\label{fig:lm}
\end{figure} 

In the network of Les Miserables, the index of domination multiplicity is $m = 0.3$ and the properties of the 8 minimum dominating sets are shown in Table 6. There is a single dominating set with maximum sum of egos' degrees (equal to 130), a single dominating set with minimum number of bridges among egos (equal to 9) and 2 dominating sets with minimum number of bridges among private--public alters (equal to 39).

\begin{table}[H]
\tiny 
\caption{The Minimum Dominating Sets (MDSs) of the network  of Les Miserables.}
\begin{tabular}{lccccccc}
\bf{MDSs}  & $|V_{\private}|$ & $|V_{\public}|$ & $\sum_{v \in S} \deg(v)$ & $|E_{\private}|$ & $|E_{\public}|$ & $|E_{\private-\public}|$ & $|E_{\ego}|$ \\

\tiny{[0, 11, 23, 25, 45, 46, 48, 49, 51, 67]} &  41 & 26 & 116 & 44 & 53 & 50 & 9 \\ 
    
    \tiny{[0, 11, 23, 25, 28, 46, 48, 49, 51, 67]} &  41 & 26 & 119 & 44 & 52 & 49 & 10 \\ 
    
    \tiny{[0, 11, 23, 25, 28, 47, 48, 49, 51, 57]} &  39 & 28 & 130 & 28 & 69 & 39 & 12 \\ 
    
    \tiny{[0, 11, 23, 25, 28, 46, 48, 49, 51, 57]} &  38 & 29 & 129 & 28 & 69 & 39 & 11 \\ 
    
    \tiny{[0, 11, 23, 25, 28, 47, 48, 49, 51, 67]} &  42 & 25 & 120 & 44 & 52 & 49 & 11 \\ 
    
    \tiny{[0, 11, 23, 25, 45, 47, 48, 49, 51, 57]} &  39 & 28 & 127 & 28 & 70 & 40 & 11 \\ 
    
    \tiny{[0, 11, 23, 25, 45, 47, 48, 49, 51, 67]} &  42 & 25 & 117 & 44 & 53 & 50 & 10 \\ 
    
    \tiny{[0, 11, 23, 25, 45, 46, 48, 49, 51, 57]} &  38 & 29 & 126 & 28 & 70 & 40 & 10 \\ 
    
    \end{tabular}
\end{table}

\section{Conclusion}
\label{concl}

The problem of decomposing a whole network into a minimal number of ego--centered subnetworks has been tackled here. In the graph--theoretic approach that we followed in order to solve this problem, the network egos were selected as the members of a minimum dominating set of the graph (assumed to be undirected here). However, since in general the solution to the graph domination problem may not be unique, we have developed an algorithm that allowed us to compute all minimum dominating sets of a graph, given (knowing) the graph domination number. Our algorithm was based on the partition of the set of vertices into three subsets, the always dominant vertices, the possible dominant vertices and the never dominant vertices. In this way, we managed to associate a dominating ego--centered decomposition to each one of the minimum dominating sets. Moreover, we introduced a number of structural measures through which we achieved comparisons and assessments of any such dominating ego--centered decompositions. In order to illustrate our methodology, we applied it to six empirical networks of various degrees of complexity in terms of graph order, size and multiplicity of minimum dominating sets. 

In a subsequent work, we intend to generalize our approach to the case of directed graphs, for which the concepts of dominating sets have been already elaborated in the graph--theoretic literature. However, we need to do a number of modifications in our algorithm that will enable us to compute all the minimum dominating sets in a directed graph and assign to each one of them a corresponding graph decomposition into ego--centered subgraphs. Subsequently, after extending our methodology, we expect to be able to investigate the domination structure and ego-centered decompositions of such empirical networks as bibliometric citation networks or other cases of directed graphs extracted from social media mining (for instance, networks of RTs from Twitter data).

\end{document}